\DeclarePairedDelimiter{\ceil}{\lceil}{\rceil}
\DeclarePairedDelimiter{\floor}{\lfloor}{\rfloor}
\newtheorem{defn}{Definition}
\newtheorem{thm}{{\cal T}heorem}[section]
\newtheorem{cor}[thm]{Corollary}
\newtheorem{prop}{Proposition}
\newtheorem{lem}[thm]{Lemma}
\newtheorem{conj}[thm]{Conjecture}
\newtheorem{constr}[thm]{Construction}
\newtheorem{note}{Remark}
\newtheorem{example}{Example}
\newcommand{\bit}{\begin{itemize}}
\newcommand{\eit}{\end{itemize}}
\newcommand{\bcor}{\begin{cor}}
\newcommand{\ecor}{\end{cor}}
\newcommand{\beq}{\begin{equation}}
\newcommand{\eeq}{\end{equation}}
\newcommand{\beqn}{\begin{equation}}
\newcommand{\eeqn}{\end{equation}}
\newcommand{\bea}{\begin{eqnarray}}
\newcommand{\eea}{\end{eqnarray}}
\newcommand{\bean}{\begin{eqnarray*}}
\newcommand{\eean}{\end{eqnarray*}}
\newcommand{\ben}{\begin{enumerate}}
\newcommand{\een}{\end{enumerate}}
\newcommand{\bdefn}{\begin{defn}}
\newcommand{\edefn}{\end{defn}}
\newcommand{\bnote}{\begin{note}}
\newcommand{\enote}{\end{note}}
\newcommand{\bprop}{\begin{prop}}
\newcommand{\eprop}{\end{prop}}
\newcommand{\blem}{\begin{lem}}
\newcommand{\elem}{\end{lem}}
\newcommand{\bthm}{\begin{thm}}
\newcommand{\ethm}{\end{thm}}
\newcommand{\bconj}{\begin{conj}}
\newcommand{\econj}{\end{conj}}
\newcommand{\bconstr}{\begin{constr}}
\newcommand{\econstr}{\end{constr}}
\newcommand{\bpf}{\begin{proof}}
\newcommand{\epf}{\end{proof}}
\title{Maximally Recoverable Codes with Hierarchical Locality \vspace{-1ex}}
\author{
	\IEEEauthorblockN{Aaditya M. Nair, V. Lalitha
		\thanks{Aaditya and Dr. Lalitha are with the Signal Processing \& Communications Research Center, International Institute of Information Technology Hyderabad, India, email:aaditya.mnair@research.iiit.ac.in, lalitha.v@iiit.ac.in.}
	}
}
\begin{document}

\maketitle

 \begin{abstract}
 Maximally recoverable codes are a class of codes which recover from all potentially recoverable erasure patterns given the locality constraints of the code. In earlier works, these codes have been studied in the context of codes with locality. The notion of locality has been extended to hierarchical locality, which allows for locality to gradually increase in levels with the increase in the number of erasures. We consider the locality constraints imposed by codes with two-level hierarchical locality and define maximally recoverable codes with data-local and local hierarchical locality. We derive certain properties related to their punctured codes and minimum distance. We give a procedure to construct hierarchical data-local MRCs from hierarchical local MRCs. We provide a construction of hierarchical local MRCs for all parameters. For the case of one global parity, we provide a different construction of hierarchical local MRC over a lower field size.

 \end{abstract}
\vspace{-2ex}
\section{Introduction}

With application to distributed storage systems, the notion of locality of a code was introduced in \cite{gopalan2012locality}, which enables efficient node repair in case of single node failures (node failures modelled as erasures)  by contacting fewer nodes than the conventional erasure codes based on maximum distance separable (MDS) codes. An extension to handle multiple erasures has been studied in \cite{locality}.
A code symbol is said to have $(r, \delta)$ locality if there exists a punctured code $\mathcal{C}_i$ such that $c_i \in Supp(\mathcal{C}_i)$ and the following conditions hold, 
1) $|Supp(C_i)| \leq r+ \delta -1$ and,
2)  $d_{min}(\mathcal{C}_i) \geq \delta$

An $[n,k,d_{min}]$ code is said to have $(r,\delta)$ information locality, if $k$ data symbols have $(r,\delta)$ locality and it is said to have all-symbol locality if all the $n$ code symbols have $(r,\delta)$ locality. An upper bound on the minimum distance of a code with $(r,\delta)$ information locality is given by
\begin{equation} \label{eq:dmin_rdelta}
d_{min} \leq n - k + 1 - \left ( \left \lceil \frac{k}{r} \right \rceil -1 \right ) (\delta-1).
\end{equation}

\subsection{Maximally Recoverable Codes with Locality}

Maximally recoverable codes (MRC) are a class of codes which recover from all information theoretically recoverable erasure patterns given the locality constraints of the code. Maximally recoverable codes with locality have been defined for the case of $\delta =2$ in \cite{mrc}. We extend the definitions here for the general $\delta$.

\begin{defn}[Data Local Maximally Recoverable Code]\label{defn:data_local}
Let $C$ be a systematic $[n,k,d_{min}]$ code. We say that $\mathcal{C}$ is an $[k, r, h, \delta]$ data-local maximally recoverable code if the following conditions are satisfied
\bit
    \item $r | k$ and $n=k+\frac{k}{r} \delta+h$.
    \item Data symbols are partitioned into $\frac{k}{r}$ groups of size $r$. For each such group, there are $\delta$ local parity symbols.
    \item The remaining $h$ global parity symbols may depend on all $k$ symbols.
    \item For any set $E \subseteq [n]$ where $E$ is obtained by picking $\delta$ coordinates from each $\frac{k}{r}$ local groups, restricting $\mathcal{C}$ to coordinates in $[n]-E$ yields a $[k+h, k]$ MDS code. 
\eit
\end{defn}
$[k, r, h, \delta]$ data-local MRC is optimum with respect to minimum distance bound in \eqref{eq:dmin_rdelta}. The minimum distance of a $[k, r, h, \delta]$ data-local MRC is given by $d_{min} = h+\delta+1.$

\begin{defn}[Local Maximally Recoverable Code]\label{defn:local}
Let $C$ be a systematic $[n,k,d_{min}]$ code. We say that $\mathcal{C}$ is an $[k, r, h, \delta]$ local maximally recoverable code if the following conditions are satisfied
\bit
    \item $r | (k+h)$ and $n=k+\frac{k+h}{r}\delta+h$.
    \item There are $k$ data symbols and $h$ global parity symbols where each global parity may depend on all data symbols.  
    \item These $k+h$ symbols are partitioned into $\frac{k+h}{r}$ groups of size $r$. For each group there are $\delta$ local parity symbols.
    \item For any set $E \subseteq [n]$ where $E$ is obtained by picking $\delta$ coordinates from each $\frac{k+h}{r}$ local groups, restricting $\mathcal{C}$ to coordinates in $[n]-E$ yields a $[k+h, k]$ MDS code.
\eit
\end{defn}
$[k, r, h, \delta]$ local MRC is optimum with respect to minimum distance bound in \eqref{eq:dmin_rdelta}. The minimum distance of a $[k, r, h, \delta]$ local MRC is given by  
\begin{equation} \label{eq:dminLMRC}
d_{min} = h+ \delta + 1 + \floor[\Big]{\frac{h}{r}}\delta.
\end{equation}

Maximally recoverable codes with locality for the case of general $\delta$ are known in literature as Partial-MDS codes (PMDS) codes. MRCs have been studied in the context of distributed storage systems and PMDS codes in the context of solid state drives (SSD) \cite{pmds}. Constructions of PMDS codes with two and three global parities have been discussed in \cite{blaum2016construction, chen2015sector}. A general construction of PMDS codes based on linearized polynomials has been provided in \cite{calis2017general}. An improved construction of PMDS codes for all parameters over small field sizes ($\mathcal{O}(\textit{max}\{\frac{k+h}{r},(r+\delta)^{\delta+h}\}^h)$) has been presented in \cite{small}. Constructions of MRCs with field size $\mathcal{O} ((\frac{k+h}{r})^r)$ have been presented in \cite{martinez2018universal}. Construction of MRCs ($\delta=2$) over small field sizes have been investigated in \cite{hu2016new, guruswami2018constructions}.

\subsection{Codes with Hierarchical Locality}
The concept of \emph{locality} has been extended to hierarchical locality in \cite{hlocality}. In the case of $(r, \delta)$ locality, if there are more than $\delta$ erasures, then the code offers no locality. In the case of codes with hierarchical locality, the locality constraints are such that with the increase in the number of erasures, the locality increases in steps. The following is the definition of code with two-level hierarchical locality.

\begin{defn} \label{defn:hier_local}
An $[n, k, d_{min}]$ linear code $\mathcal{C}$ is a code with \emph{hierarchical locality} having parameters $[(r_1, \delta_1), (r_2, \delta_2)]$  if for every symbol $c_i$, $1 \leq i \leq n$, there exists a punctured code $\mathcal{C}_i$ such that $c_i \in Supp(C_i)$ and the following conditions hold, 
    1) $|Supp(\mathcal{C}_i)| \leq r_1+\delta-1 $
    2) $d_{min}(\mathcal{C}_i) \geq \delta_1$ and
    3) $\mathcal{C}_i$ is a code with $(r_2, \delta_2)$ locality.
\end{defn}

An upper bound on the minimum distance of a code with two-level hierarchical locality is given by 
\begin{equation} \label{eqn:min_dist_bound}
    d \leq n - k + 1 -(\ceil[\Big]{\frac{k}{r_2}}-1)(\delta_2 - 1) - (\ceil[\Big]{\frac{k}{r_1}}-1)(\delta_1 - \delta_2).
\end{equation}

\subsection{Our Contributions}

In this work, we consider the locality constraints imposed by codes with two-level hierarchical locality and define maximally recoverable codes with data-local and local hierarchical locality. We prove that certain punctured codes of these codes are data-local/local MRCs. We derive the minimum distance of hierarchical data-local MRCs. We give a procedure to construct hierarchical data-local MRCs from hierarchical local MRCs. We provide a construction of hierarchical local MRCs for all parameters. For the case of one global parity, we provide a different construction of hierarchical local MRC over a lower field size.

\subsection{Notation}
For any integer $n$, $[n] = \{1, 2, 3 \ldots, n\}$. For any $E \subseteq [n]$, $\bar{E} = [n]-E$. For any $[n,k]$ code, and any $E \subseteq [n]$, $\mathcal{C}|_E$ refers to the punctured code obtained by restricting $\mathcal{C}$ to the coordinates in $E$. This results in an $[n-|E|, k']$ code where $k' \leq k$. For any $m\times n$ matrix $H$ and $E \subseteq [n]$, $H|_E$ is the $m \times |E|$ matrix formed by restricting $H$ to columns indexed by $E$.
In several definitions to follow, we implicitly assume certain divisibility conditions which will be clear from the context.

\vspace{-1ex}
\section{Maximally Recoverable Codes with Hierarchical Locality} \label{sec:HLMRC}

In this section, we define hierarchical data-local and local MRCs and illustrate the definitions through an example. We describe these codes via their parity check matrices instead of generator matrices (data local and local MRCs were defined by their generator matrices).

\begin{defn}[Hierarchical Data Local Code] \label{defn:HDLC}
We define a $ [k, r_1, r_2, h_1, h_2, \delta] $  hierarchical data local (HDL) code of length  $n=k+h_1+\frac{k}{r_1}(h_2+\frac{r_1}{r_2}\delta)$ as follows:
\begin{itemize}
\item The code symbols $c_1, \ldots, c_n$ satisfy $h_1$ global parities given by $\sum_{j = 1}^n u_j^{(\ell)} c_j = 0, \ \ 1 \leq  \ell \leq h_1$.
\item The first $n-h_1$ code symbols are partitioned into $t_1 = \frac{k}{r_1}$ groups $A_i, 1 \leq i \leq t_1$ such that $|A_i| =  r_1 + h_2 + \frac{r_1}{r_2} \delta = n_1$. The code symbols in the $i^\text{th}$ group, $1 \leq i \leq t_1$ satisfy the following $h_2$  mid-level parities $\sum_{j = 1}^{n_1} v_{i,j}^{(\ell)} c_{(i-1)n_1+j} = 0, \ \ 1 \leq  \ell \leq h_2$.
\item The first $n_1-h_2$ code symbols of the $i^\text{th}$ group, $1 \leq i \leq t_1$ are partitioned into $t_2 = \frac{r_1}{r_2}$ groups $B_{i,s}, 1 \leq i \leq t_1, 1 \leq s \leq t_2$ such that $|B_{i,s}| =  r_2 + \delta = n_2$. The code symbols in the $(i,s)^\text{th}$ group, $1 \leq i \leq t_1, 1 \leq s \leq t_2$ satisfy the following $\delta$  local parities $\sum_{j = 1}^{n_2} w_{i,s,j}^{(\ell)} c_{(i-1)n_1+(s-1)n_2+j} = 0, \ \ 1 \leq  \ell \leq \delta$.
\end{itemize}
\end{defn}

\begin{defn}[Hierarchical Data Local MRC] \label{defn:HDLMRC}
Let $\mathcal{C}$ be a $[k, r_1, r_2, h_1, h_2, \delta]$ HDL code. Then $C$ is maximally recoverable if for any set $E \subset [n]$ such that $|E| = k+h_1$, $|E \bigcap B_{i, s}| \leq r_2$ $\forall \ i, s$ and $|E \bigcap A_i|  = r_1$ $\forall \ i$,
the punctured code $\mathcal{C}|_E$ is a $[k+h_1,k,h_1+1]$ MDS code. 
\end{defn}

\begin{defn}[Hierarchical Local Code] \label{defn:HLC}
We define a $ [k, r_1, r_2, h_1, h_2, \delta] $ hierarchical local (HL) code of length  $n=k+h_1+\frac{k+h_1}{r_1}(h_2+\frac{r_1+h_2}{r_2}\delta)$ as follows:
\begin{itemize}
\item The code symbols $c_1, \ldots, c_n$ satisfy $h_1$ global parities given by $\sum_{j = 1}^n u_j^{(\ell)} c_j = 0, \ \ 1 \leq  \ell \leq h_1$.
\item The $n$ code symbols are partitioned into $t_1 = \frac{k+h_1}{r_1}$ groups $A_i, 1 \leq i \leq t_1$ such that $|A_i| = r_1 + h_2 + \frac{r_1+h_2}{r_2}\delta = n_1$. The code symbols in the $i^\text{th}$ group, $1 \leq i \leq t_1$ satisfy the following $h_2$  mid-level parities $\sum_{j = 1}^{n_1} v_{i,j}^{(\ell)} c_{(i-1)n_1+j} = 0, \ \ 1 \leq  \ell \leq h_2$.
\item The $n_1$ code symbols of the $i^\text{th}$ group, $1 \leq i \leq t_1$ are partitioned into $t_2 = \frac{r_1+h_2}{r_2}$ groups $B_{i,s}, 1 \leq i \leq t_1, 1 \leq s \leq t_2$ such that $|B_{i,s}| =  r_2 + \delta = n_2$. The code symbols in the $(i,s)^\text{th}$ group, $1 \leq i \leq t_1, 1 \leq s \leq t_2$ satisfy the following $\delta$  local parities $\sum_{j = 1}^{n_2} w_{i,s,j}^{(\ell)} c_{(i-1)n_1+(s-1)n_2+j} = 0, \ \ 1 \leq  \ell \leq \delta$.
\end{itemize}
\end{defn}

\begin{defn}[Hierarchical Local MRC] \label{defn:HLMRC}
Same as Definiton \ref{defn:HDLMRC}.
\end{defn}

In an independent parallel work \cite{martinez2018universal}, a class of MRCs known as multi-layer MRCs have been introduced. We would like to note that hierarchical local MRCs (given in Definition \ref{defn:HLMRC}) form a subclass of these multi-layer MRCs.

\begin{example}
We demonstrate the structure of the parity check matrix for an $[k=5, r_1=3, r_2=2, h_1=1, h_2=1 , \delta=2]$ HL code.
The length of the code is $n=k+h_1+\frac{k+h_1}{r_1}(h_2+\frac{r_1+h_2}{r_2}\delta) = 16$. The parity check matrix of the code is given below:

\[
H=\begin{bmatrix}
	\begin{matrix}
	 	\begin{array}{r|r}
	 		\begin{matrix} 
	 		\begin{matrix} 
	 		M_{1,1} & \\
	 		& M_{1,2} \\
	 		\end{matrix} \\
	 		\hline
	 		N_1 \\
	 		\end{matrix} & \\
	 		\hline
	 		& \begin{matrix} 
	 		\begin{matrix} 
	 		M_{2,1} & \\
	 		& M_{2,2} \\
	 		\end{matrix} \\
	 		\hline
	 		N_2 \\
	 		\end{matrix} \\
	 	\end{array} \\
	 	\hline
	 	P \\
	\end{matrix}
\end{bmatrix}
\]

%
%
%
%


\[
	M_{i,j} = \begin{bmatrix}
	w_{i,j,1}^{(1)} & w_{i,j,2}^{(1)} & w_{i,j,3}^{(1)} & w_{i,j,4}^{(1)} \\
	w_{i,j,1}^{(2)} & w_{i,j,2}^{(2)} & w_{i,j,3}^{(2)} & w_{i,j,4}^{(2)} \\
	\end{bmatrix},
\]
\[
	N_i = \begin{bmatrix}
	v_{i,1}^{(1)} & v_{i,2}^{(1)} & \ldots & v_{i,8}^{(1)} \\
	\end{bmatrix} \text{and }
	P = \begin{bmatrix}
	u_1^{(1)} & \ldots & u_{16}^{(1)} \\
	\end{bmatrix}
\]

\end{example}

\section{Properties of MRC with Hierarchical Locality}

In this section, we will derive two properties of MRC with hierarchical locality. We will show that the middle codes of a HDL/HL-MRC have to be data-local and local MRC respectively. Also, we derive the minimum distance of HDL MRC.

\begin{lem} \label{lem:midHDLMRC}
Consider a $[k, r_1, r_2, h_1, h_2, \delta]$ HDL-MRC $\mathcal{C}$. Let $A_i, 1 \leq i \leq t_1$ be the supports of the middle codes as defined in Definition \ref{defn:HDLC}. Then, for each $i$, $\mathcal{C}_{A_i}$ is a $[r_1, r_2, h_2, \delta]$ data-local MRC.
\end{lem}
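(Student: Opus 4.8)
The plan is to verify the four defining conditions of an $[r_1, r_2, h_2, \delta]$ data-local MRC (Definition~\ref{defn:data_local} adapted to the hierarchical parameters, as in the body of the paper) for the punctured code $\mathcal{C}_{A_i}$. The first three conditions are structural and essentially follow by unwinding the definition of an HDL code: by Definition~\ref{defn:HDLC}, the block $A_i$ consists of $n_1 = r_1 + h_2 + \frac{r_1}{r_2}\delta$ coordinates, carrying $h_2$ mid-level parities and, within each of its $t_2 = \frac{r_1}{r_2}$ sub-blocks $B_{i,s}$ of size $n_2 = r_2 + \delta$, exactly $\delta$ local parities. So the length identity $n_1 = r_1 + \frac{r_1}{r_2}\delta + h_2$ matches the data-local length formula with ``$k$''$=r_1$, ``$r$''$=r_2$, ``$h$''$=h_2$; the partition into $\frac{r_1}{r_2}$ groups of $r_2$ data symbols each with $\delta$ local parities is exactly the $B_{i,s}$ structure; and the $h_2$ mid-level parities play the role of the global parities of the middle code. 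One should also argue that the $r_1$ non-parity symbols of $A_i$ are genuinely information symbols of $\mathcal{C}_{A_i}$, i.e. that the dimension of $\mathcal{C}_{A_i}$ is $r_1$; this follows because the MRC property of $\mathcal{C}$ forces enough independence (see below), or more directly from the minimum-distance/parameter bookkeeping.

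The substantive content is the fourth (maximal-recoverability) condition: for every $E' \subseteq A_i$ obtained by deleting $\delta$ coordinates from each $B_{i,s}$, the restriction $\mathcal{C}_{A_i}|_{A_i \setminus E'}$ is an $[r_1 + h_2,\, r_1]$ MDS code. The approach is to lift such a local erasure pattern $E'$ to a global erasure pattern $E \subseteq [n]$ admissible for the HDL-MRC definition (Definition~\ref{defn:HDLMRC}): keep $E'$ inside block $A_i$, and in every other block $A_{i'}$, $i' \neq i$, pick $\delta$ coordinates from each sub-block $B_{i',s}$ so that $|E \cap A_{i'}| = r_1 + h_2 + \dots$ — wait, more carefully: the HDL-MRC definition requires $|E| = k + h_1$, $|E \cap B_{i',s}| \le r_2$ for all $i', s$, and $|E \cap A_{i'}| = r_1$ for all $i'$. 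In block $A_i$ we have removed $\delta$ from each of the $t_2$ sub-blocks, i.e. $|E' \cap A_i| = \frac{r_1}{r_2}\delta = n_1 - r_1 - h_2$; this is too large to satisfy $|E\cap A_i| = r_1$ directly. So instead I would choose $E$ to be the \emph{surviving} coordinates: set $E := [n] \setminus (E'' )$ appropriately. Concretely, take $E$ to be a set of $k+h_1$ surviving coordinates: inside $A_i$, let the survivors be $A_i \setminus E'$, which has size $r_1 + h_2$; in each other block pick $r_1$ survivors spread as one would for the data-local structure. One checks $|E \cap A_i| = r_1 + h_2$ but the definition wants $|E \cap A_i| = r_1$ — so the global admissible pattern does not literally restrict to an admissible local pattern with the same counts, because the middle code has $h_2$ extra ``global'' symbols that become data symbols one level up. The clean fix is: augment $E$ in the other blocks to total $k + h_1$, using the slack from $h_1$, and invoke the MDS conclusion $\mathcal{C}|_E$ is $[k+h_1, k]$ MDS; then puncture/shorten this MDS code down to the coordinates in $A_i$ to conclude that $\mathcal{C}_{A_i}|_{A_i \setminus E'}$ has the MDS parameters $[r_1+h_2, r_1]$. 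The key algebraic fact used here is that a punctured/shortened code of an MDS code, taken on a coordinate subset of the right size, is again MDS, together with a dimension count ensuring the punctured middle code has dimension exactly $r_1$.

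The main obstacle — and the step requiring the most care — is precisely this bookkeeping of which global admissible erasure patterns $E$ restrict correctly to the desired local patterns, and showing that ranging over all admissible global $E$ we realize \emph{every} admissible local $E'$ on $A_i$ (so that MDS-ness of all the global restrictions transfers to MDS-ness of all the local restrictions). This amounts to: (i) given any $E' \subseteq A_i$ deleting $\delta$ per sub-block, extend it to a global $E$ with $|E| = k+h_1$, $|E \cap B_{j,s}| \le r_2$ everywhere, $|E \cap A_j| = r_1$ for $j \ne i$, absorbing the discrepancy in $A_i$ into the $h_1$ global budget — one must verify the arithmetic $|E| = |E' \cap A_i| + \sum_{j \ne i} r_1 = \frac{r_1}{r_2}\delta + (t_1 - 1) r_1$ can be reconciled with $k + h_1$ by instead counting survivors, i.e. the natural statement is about the surviving set $\bar E$. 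I would therefore phrase everything in terms of $\bar E$ from the outset: $\bar E$ has $|\bar E| = k + h_1$, meets each $A_j$ in $r_1$ (for $j\ne i$) coordinates... and the cleanest route is to show that for suitable choices, $\bar E \cap A_i$ is an arbitrary ``data-local-admissible survivor set'' of size $r_1 + h_2$ plus we shorten on the remaining $h_1$ among other blocks. Once the pattern-matching lemma is in place, the MDS transfer is routine: shortening an MDS code on a coordinate set and restricting gives an MDS code of the expected length and dimension, and a short dimension argument (using that $\mathcal{C}_{A_i}$ has $r_1$ information symbols, $h_2 + \frac{r_1}{r_2}\delta$ parities, and none of the local parities are degenerate because $\mathcal{C}$ is maximally recoverable) pins the parameters to $[r_1+h_2, r_1]$, completing the verification.
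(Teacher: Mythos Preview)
Your structural verification (the first three bullets) is fine, but the argument for the maximal-recoverability condition has a real gap that you identify yourself and do not close. You try to realize the entire local survivor set $S \subseteq A_i$ of size $r_1+h_2$ as $E\cap A_i$ for some globally admissible $E$; as you note, Definition~\ref{defn:HDLMRC} forces $|E\cap A_i|=r_1$ for \emph{every} $i$, with no slack coming from $h_1$ (the extra $h_1$ coordinates of $E$ lie among the global-parity positions outside $\bigcup_i A_i$). So no admissible $E$ has $|E\cap A_i|=r_1+h_2$, and the ``shorten the global MDS code down to $A_i$'' idea cannot produce a code of length $r_1+h_2$: it only sees $r_1$ coordinates of $A_i$.

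The missing move is to test MDS-ness of $\mathcal{C}_{A_i}|_S$ one $r_1$-subset at a time, which is exactly what the paper does (by contraposition). If $\mathcal{C}_{A_i}|_S$ were \emph{not} $[r_1+h_2,r_1]$ MDS, there would exist $E'\subseteq S$ with $|E'|=r_1$ and $\operatorname{rank}(G|_{E'})<r_1$. Because $E'\subseteq S$ and $|S\cap B_{i,s}|=r_2$, automatically $|E'\cap B_{i,s}|\le r_2$; and $|E'|=r_1$ is precisely the quota for $A_i$. Hence $E'$ extends to a global admissible $E$ with $E\cap A_i=E'$ (fill each other $A_j$ with any $r_1$ coordinates respecting the $B$-constraints and include the $h_1$ global parities). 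Then $\mathcal{C}|_E$ would fail to be $[k+h_1,k]$ MDS, contradicting the HDL-MRC hypothesis. Equivalently, in the direct direction: for every $r_1$-subset $T\subseteq S$, extend $T$ to such an $E$, use that $\mathcal{C}|_E$ is MDS to conclude $\operatorname{rank}(G|_T)=r_1$, and deduce that $\mathcal{C}_{A_i}|_S$ is MDS. Either way, the crux is dropping from the $(r_1+h_2)$-sized survivor set to its $r_1$-subsets so that the local and global admissibility constraints on $A_i$ match; your proposal never makes this reduction.
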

\begin{proof}
Suppose not. This means that for some $i$, the middle code $\mathcal{C}_{A_i}$ is not a $[r_1, r_2, h_2, \delta]$ data-local MRC. By the definition of data-local MRC, we have that there exists a set $E_1 \subset A_i$ such that $|E_1| = r_1 + h_2$ and $\mathcal{C}_{E_1}$ is not an $[r_1+h_2, r_1, h_2+1]$ MDS code. This implies that there exists a subset $E' \subset E_1$ such that $|E'| = r_1$ and $\text{rank}(G|_{E'}) < r_1$. We can extend the set $E'$ to obtain a set $E \subset [n]$, $|E| = k+h_1$ which satisfies the conditions in the definition of HDL-MRC. The resulting punctured code $\mathcal{C}_{E}$ cannot be MDS since there exists an $r_1 < k$ sized subset of $E$ such that $\text{rank}(G|_{E'}) < r_1$.
\end{proof}

\begin{lem} \label{lem:midHLMRC}
Consider a $[k, r_1, r_2, h_1, h_2, \delta]$ HL-MRC $\mathcal{C}$. Let $A_i, 1 \leq i \leq t_1$ be the supports of the middle codes as defined in Definition \ref{defn:HLC}. Then, for each $i$, $\mathcal{C}_{A_i}$ is a $[r_1, r_2, h_2, \delta]$ local MRC.
\end{lem}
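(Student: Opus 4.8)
The plan is to mimic the proof of Lemma~\ref{lem:midHDLMRC} almost verbatim, replacing the data-local MRC structure with the local MRC structure. Recall that in the HL code the $i$-th middle code $\mathcal{C}_{A_i}$ has length $n_1 = r_1 + h_2 + \frac{r_1+h_2}{r_2}\delta$, which is exactly the block length of a $[r_1, r_2, h_2, \delta]$ local code, so the claimed structure is at least dimensionally consistent. We argue by contradiction: assume that for some $i$ the middle code $\mathcal{C}_{A_i}$ fails to be a $[r_1, r_2, h_2, \delta]$ local MRC.

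First I would unpack what that failure means via Definition~\ref{defn:local} (and the associated MRC condition in Definition~\ref{defn:HLMRC}/\ref{defn:HDLMRC}): there must exist a set $E_1 \subseteq A_i$ obtained by picking $\delta$ coordinates from each of the $\frac{r_1+h_2}{r_2}$ local groups $B_{i,s}$ inside $A_i$, with $|E_1| = r_1 + h_2$ and $|A_i \setminus E_1|$ consisting of $r_2$ coordinates from each $B_{i,s}$, such that $\mathcal{C}_{A_i \setminus E_1}$ is not an $[r_1+h_2, r_1]$ MDS code. Equivalently, there is a subset $E' \subseteq A_i \setminus E_1$ with $|E'| = r_1$ and $\mathrm{rank}(G|_{E'}) < r_1$, where $G$ is a generator matrix of the full code $\mathcal{C}$ (or of $\mathcal{C}_{A_i}$; restricting $G$ to $A_i$-columns does not change the rank of the submatrix on $E' \subseteq A_i$).

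Next I would extend $E'$ to a global erasure-complement pattern $E \subseteq [n]$ with $|E| = k + h_1$ that satisfies the constraints of Definition~\ref{defn:HLMRC}: namely $|E \cap B_{j,s}| \le r_2$ for all $j,s$ and $|E \cap A_j| = r_1$ for all $j$. Inside $A_i$ we include $E'$ together with enough further coordinates of $A_i \setminus E_1$ to reach $r_1$ coordinates total (this is possible since $|A_i \setminus E_1| = r_1$ exactly, so in fact $E \cap A_i = A_i \setminus E_1$), automatically respecting the $\le r_2$ per-$B$ bound. Inside every other group $A_j$, $j \ne i$, we freely pick $r_1$ coordinates subject to $\le r_2$ per local group $B_{j,s}$, which is always feasible because $r_1 = t_2 r_2$ and each $B_{j,s}$ has $r_2 + \delta \ge r_2$ coordinates. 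This gives $|E| = t_1 r_1 = k + h_1$ as required. By the HL-MRC hypothesis, $\mathcal{C}_E$ must be a $[k+h_1, k]$ MDS code; but $E' \subseteq E$ has $|E'| = r_1 \le k$ (here $r_1 < k$, or at worst $r_1 \le k$, suffices) with $\mathrm{rank}(G|_{E'}) < r_1$, contradicting the MDS property. Hence $\mathcal{C}_{A_i}$ is a $[r_1, r_2, h_2, \delta]$ local MRC.

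The only genuinely delicate point — and the main obstacle worth checking carefully — is the feasibility of the extension step: one must verify that the partial pattern $E'$ inside $A_i$ can always be completed to a full $E$ meeting the global divisibility/size constraints of Definition~\ref{defn:HLMRC}, in particular that $|E'| = r_1$ together with $E' \subseteq A_i \setminus E_1$ does not already overload any $B_{i,s}$ beyond $r_2$; this is immediate here since $A_i \setminus E_1$ by construction has exactly $r_2$ coordinates in each $B_{i,s}$. One should also double-check the edge case $r_1 = k$ (i.e. $t_1 = 1$, $h_1$ arbitrary), where the ``$r_1 < k$'' inequality used at the end of Lemma~\ref{lem:midHDLMRC} becomes equality; but a rank deficiency $\mathrm{rank}(G|_{E'}) < r_1 = k$ still contradicts $\mathcal{C}_E$ being $[k+h_1,k]$ MDS, since an MDS code of dimension $k$ has every $k$-subset of coordinates of full rank $k$, so the argument goes through unchanged.
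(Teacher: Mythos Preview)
Your approach is exactly the one the paper intends---its own proof of this lemma is the single sentence ``Proof is similar to the proof of Lemma~\ref{lem:midHDLMRC}''---and your contradiction strategy is correct. However, you have some size-bookkeeping slips that make the extension step as written fail: if $E_1$ is the set obtained by picking $\delta$ coordinates from each of the $t_2$ local groups, then $|E_1| = t_2\delta$ (not $r_1+h_2$) and hence $|A_i\setminus E_1| = r_1+h_2$ (not $r_1$). So you cannot take $E\cap A_i = A_i\setminus E_1$; the correct extension inside $A_i$ is simply $E\cap A_i = E'$, which already has size $r_1$, lies in $A_i\setminus E_1$, and therefore automatically satisfies $|E'\cap B_{i,s}|\le r_2$. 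With that correction the argument goes through verbatim.
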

\begin{proof}
Proof is similar to the proof of Lemma \ref{lem:midHDLMRC}.
\end{proof}

\subsection{Minimum Distance of HDL-MRC}

\begin{lem}
The minimum distance of a $[k, r_1, r_2, h_1, h_2, \delta]$ HDL-MRC is given by $d = h_1 + h_2 + \delta +1$.
\end{lem}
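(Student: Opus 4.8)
The plan is to prove the minimum distance claim by establishing matching upper and lower bounds. For the upper bound, I would exhibit a codeword of weight $h_1 + h_2 + \delta + 1$ (or equivalently a set of $h_1 + h_2 + \delta$ coordinates whose puncturing drops the dimension), and for the lower bound I would argue that no nonzero codeword can have smaller weight, using the MRC property together with Lemma~\ref{lem:midHDLMRC}.

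For the \emph{lower bound} $d \geq h_1 + h_2 + \delta + 1$: suppose $E$ is the support of a nonzero codeword with $|E| \leq h_1 + h_2 + \delta$. I would look at how $E$ distributes across the local groups $B_{i,s}$ and middle groups $A_i$. The key observations are: within any single local group $B_{i,s}$ (an $[n_2, r_2]$ MDS-like block after restriction, by the data-local MRC structure), a nonzero codeword restricted there is either zero or has weight at least $\delta + 1$; similarly within a middle code $\mathcal{C}_{A_i}$, which is a $[r_1,r_2,h_2,\delta]$ data-local MRC by Lemma~\ref{lem:midHDLMRC} and hence has minimum distance $h_2 + \delta + 1$. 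So if $E$ touches only one middle group, $|E| \geq h_2 + \delta + 1 > h_1 + h_2 + \delta$ only when $h_1 = 0$; more carefully, I need to combine the global-parity count. The cleanest route: pick any $E' \subseteq [n]$ with $|E'| = k + h_1$ satisfying the HDL-MRC intersection conditions and containing $\mathrm{supp}(c)$ — this is possible precisely when $\mathrm{supp}(c)$ meets each $B_{i,s}$ in at most $r_2$ coordinates and each $A_i$ in at most $r_1$ coordinates, which a minimum-weight codeword of small weight will satisfy (this needs checking). Then $\mathcal{C}|_{E'}$ is a $[k+h_1, k, h_1+1]$ MDS code, forcing weight $\geq h_1 + 1$, which alone is too weak; so instead I would argue that the erased complement must absorb all local and mid-level parities that the codeword's support fails to "use," giving the additive contributions $\delta$ (one local group not fully covered) $+\ h_2$ (one middle group not fully covered) $+\ h_1 + 1$. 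This is essentially the standard MRC minimum-distance argument applied at two hierarchical levels simultaneously.

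For the \emph{upper bound}: I would construct an explicit low-weight codeword by concentrating all of its support inside one middle group $A_1$, and inside $A_1$ inside one local group $B_{1,1}$ to the extent possible. Since $\mathcal{C}_{A_1}$ is a $[r_1, r_2, h_2, \delta]$ data-local MRC with $d_{min} = h_2 + \delta + 1$, there is a codeword of $\mathcal{C}_{A_1}$ of that weight; but this must also be extendable to a codeword of the full code $\mathcal{C}$ satisfying the $h_1$ global parities. Generically the global parities are nontrivial, so I would instead take a minimum-weight codeword supported on $A_1 \cup A_2$ (two middle groups), so that the $h_1$ global parities can be satisfied by spreading weight into the second group's global-parity-like coordinates — giving an extra $h_1$ symbols — for a total of $h_2 + \delta + 1 + h_1$. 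Alternatively, and more robustly, I would invoke the general minimum-distance bound \eqref{eqn:min_dist_bound} for two-level hierarchical locality with the appropriate substitution of parameters ($r_1, \delta_1 = h_2+\delta+1$-type quantities, etc.) and check it evaluates to $h_1 + h_2 + \delta + 1$, so that the lower bound argument already proves optimality.

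The main obstacle I anticipate is the bookkeeping in the lower bound: showing that a would-be codeword of weight $\leq h_1 + h_2 + \delta$ always has a support satisfying the intersection constraints of Definition~\ref{defn:HDLMRC} (so that the MRC/MDS property can be invoked), and then correctly accounting for \emph{all} of the parity constraints — $\delta$ at the local level, $h_2$ at the mid level, $h_1$ at the global level — that a small support is forced to leave "unsatisfied" and hence must be compensated by additional nonzero coordinates. Getting these three contributions to add rather than overlap is the crux; I would handle it by a careful counting argument that chooses, for the unique group the support fails to cover at each level, a complement set of the exact size needed to trigger the MDS property one level up, exactly as in the single-level data-local MRC distance proof, but iterated.
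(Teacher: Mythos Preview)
For the upper bound, your ``alternative'' route via the hierarchical-locality bound \eqref{eqn:min_dist_bound} is exactly what the paper does: it identifies the HDL-MRC as a code with hierarchical locality having $\delta_2-1=\delta$, $\delta_1=h_2+\delta+1$, and the given $n$, then substitutes into \eqref{eqn:min_dist_bound} to obtain $d\le h_1+h_2+\delta+1$. Your first plan of constructing an explicit low-weight codeword is unnecessary and, as you noticed, awkward to carry out.

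For the lower bound, the paper takes a different and more direct route than your codeword-support argument. Your plan to find an admissible $E'$ with $\mathrm{supp}(c)\subseteq E'$ runs into a real obstacle, not just bookkeeping: nothing forces a small-weight codeword to satisfy $|\mathrm{supp}(c)\cap B_{i,s}|\le r_2$ (the local code is $[r_2+\delta,r_2]$ MDS, so the restricted weight can be as large as $r_2+\delta$), and once that bound fails no valid $E'$ containing $\mathrm{supp}(c)$ exists. The paper instead argues via erasure correction and peels off the levels one at a time: by Lemma~\ref{lem:midHDLMRC} each $\mathcal{C}_{A_i}$ is a $[r_1,r_2,h_2,\delta]$ data-local MRC with minimum distance $h_2+\delta+1$, so it corrects any $h_2+\delta$ erasures within $A_i$; the remaining $h_1$ erasures are then extended, by adjoining $k$ non-erased coordinates, to a set $E$ satisfying the conditions of Definition~\ref{defn:HDLMRC}, and the $[k+h_1,k,h_1+1]$ MDS code $\mathcal{C}|_E$ corrects those. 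This two-stage decomposition replaces your attempted simultaneous accounting of the $\delta$, $h_2$, and $h_1$ contributions and sidesteps the containment issue entirely.
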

\begin{proof}
Based on the definition of HDL-MRC, it can be seen that the $[k, r_1, r_2, h_1, h_2, \delta]$ HDL-MRC is a code with hierarchical locality as per Definition \ref{defn:hier_local} 
with $k, r_1,r_2$ being the same, $ \delta_2 -1 = \delta$, $\delta_1  =  h_2 + \delta +1$ and $n  =  k + h_1 + \frac{k}{r_1}(h_2 + \frac{r_1}{r_2}\delta)$
Substituting these parameters in the minimum distance bound in \eqref{eqn:min_dist_bound}, we have that $d \leq h_1 + h_2 + \delta +1$.

By Lemma \ref{lem:midHDLMRC}, we know that $\mathcal{C}_{A_i}$ is a $[r_1, r_2, h_2, \delta]$ data-local MRC. The minimum distance of $\mathcal{C}_{A_i}$ (from \eqref{eq:dminLMRC}) is $h_2 + \delta +1$. Thus, the middle code itself can recover from any $h_2 + \delta$ erasures. The additional $h_1$ erasures can be shown to be extended to a set $E$ (consisting of $k$ additional non-erased symbols) which satisfies the conditions in Definition \ref{defn:HDLMRC}. Since, the punctured code $\mathcal{C}|_E$ is a $[k+h_1,k,h_1+1]$ MDS code, it can be used to recover the $h_1$ erasures. Hence, $[k, r_1, r_2, h_1, h_2, \delta]$ HDL-MRC can recover from any $h_1 + h_2 + \delta$ erasures.
\end{proof}
 

\subsection{Deriving HDL-MRC from HL-MRC}
In this section, we give a method to derive any HDL-MRC from a HL-MRC. Assume an $ [k, r_1, r_2, h_1, h_2, \delta] $ HL-MRC $\mathcal{C}$. Consider a particular set $E$ of $k+h_1$ symbols satisfying the conditions given in Definition \ref{defn:HLMRC}. We will refer to the elements of set $E$ as ``primary symbols". By the definition of HL-MRC, the code $\mathcal{C}$ when punctured to $E$ results in a $[k+h_1, k, h_1+1]$ MDS code. Hence, any $k$ subset of $E$ forms an information set. We will refer to the first $k$ symbols of $E$ as ``data symbols" and the rest $h_1$ symbols as global parities.
The symbols in $[n]\setminus E$ will be referred to as parity symbols (mid-level parities and local parities) and it can be observed that the parity symbols can be obtained as linear combinations of data symbols.
\begin{itemize}
    \item If $r_1 \mid h_1$ and $r_2 \mid h_2$,
    \begin{enumerate}
        \item For $A_i, \frac{k}{r_1} < i \leq \frac{k+h_1}{r_1}$, drop all the parity symbols, including $h_2$ mid-level parities per $A_i$ as well as the $\delta$ local parities per $B_{i,s} \subset A_i$. As a result, we would be left with $h_1$ ``primary symbols" in the local groups $A_i, \frac{k}{r_1} < i \leq \frac{k+h_1}{r_1}$. These form the global parities of the HDL-MRC. This step ensures that mid-level and local parities formed from global parities are dropped.
        
                \item For each $B_{i, s},\: 1 \leq i \leq \frac{k}{r_1},\: s > \frac{r_1}{r_2}$, drop the $\delta$ local parities. This step ensures that local parities formed from mid-level parities are dropped.

    \end{enumerate}
    This results in an $ [k, r_1, r_2, h_1, h_2, \delta] $ HDL-MRC.
    
    \item If $r_1 \nmid h_1$ and $r_2 \mid h_2$,
    \begin{enumerate}
        \item From the groups $A_i, \lfloor\frac{k}{r_1}\rfloor + 1 < i \leq \frac{k+h_1}{r_1}$, drop all the parity symbols, including $h_2$ mid-level parities per $A_i$ as well as the $\delta$ local parities per $B_{i,s} \subset A_i$.
        \item For each $B_{i, s},\: 1 \leq i \leq \lfloor\frac{k}{r_1}\rfloor,\: s > \frac{r_1}{r_2}$, drop the $\delta$ local parities.

        \item Drop the $k - \lfloor\frac{k}{r_1}\rfloor r_1$ data symbols in $A_i, i = \lfloor\frac{k}{r_1}\rfloor + 1$ and recalculate all the parities (local, mid-level and global) by setting these data symbols as zero in the linear combinations.
    \end{enumerate}
    This results in an $ [\lfloor\frac{k}{r_1}\rfloor r_1, r_1, r_2, h_1, h_2, \delta] $ HDL-MRC.  
\end{itemize}
For the case of $r_2 \nmid h_2$, HDL-MRC can be derived from HL-MRC using similar techniques as above. Hence, in the rest of the paper, we will discuss the construction of HL-MRC.

\section{General Construction of HL-MRC}

In this section, we will present a general construction of $ [k, r_1, r_2, h_1, h_2, \delta]$ HL-MRC. First, we will provide the structure of the code and then derive necessary and sufficient conditions for the code to be HL-MRC. Finally, we will apply a known result of BCH codes to complete the construction.

\begin{defn}
	A multiset $S \subseteq \mathbb{F}$ is $k$-wise independent over $\mathbb{F}$ if for every set $T \subseteq S$ such that $|T| \leq k$, $T$ is linearly independent over $\mathbb{F}$.
\end{defn}
\begin{lem} \label{lem:mds_linearized}
Let $\mathbb{F}_{q^t}$ be an extension of $\mathbb{F}_q$. Let $a_1, a_2, \ldots, a_n$ be elements of $\mathbb{F}_{q^t}$. The following matrix

\[
\begin{bmatrix}
a_1 & a_2 & a_3 & \ldots & a_n \\
a_1^{q} & a_2^{q} & a_3^{q} & \ldots & a_n^{q} \\
\vdots & \vdots & \vdots & \ldots & \vdots \\
a_1^{q^{k-1}} & a_2^{q^{k-1}} & a_3^{q^{k-1}} & \ldots & a_n^{q^{k-1}} \\
\end{bmatrix}
\]
is the generator matrix of a $[n,k]$ MDS code if and only if $a_1, a_2, \ldots, a_n$ are $k$-wise linearly independent over $\mathbb{F}_q$.
\end{lem}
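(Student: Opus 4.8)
The plan is to exploit the classical theory of linearized (Moore) matrices, which mirrors the ordinary Vandermonde determinant but over the semilinear structure of $\mathbb{F}_{q^t}/\mathbb{F}_q$. The key fact I would invoke is the Moore determinant formula: for elements $b_1,\dots,b_k \in \mathbb{F}_{q^t}$, the $k\times k$ matrix with $(i,j)$-entry $b_j^{q^{i-1}}$ has determinant equal (up to a nonzero scalar) to $\prod$ over all $\mathbb{F}_q$-linearly-independent "directions", and in particular it is nonzero if and only if $b_1,\dots,b_k$ are linearly independent over $\mathbb{F}_q$. This is the linearized analogue of $\prod_{i<j}(b_j-b_i)\neq 0 \iff$ the $b_i$ are distinct.

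First I would set up the correspondence: a code with the displayed generator matrix $G$ is MDS if and only if every $k\times k$ submatrix of $G$, i.e. every choice of $k$ columns indexed by $\{i_1,\dots,i_k\}\subseteq[n]$, is invertible. The submatrix on columns $i_1,\dots,i_k$ is exactly the Moore matrix built from $a_{i_1},\dots,a_{i_k}$. So MDS $\iff$ for every size-$k$ subset, the corresponding Moore matrix is nonsingular $\iff$ (by the Moore determinant criterion) every $k$ of the $a_i$ are $\mathbb{F}_q$-linearly independent. Since any subset of size $<k$ of a set of $k$ linearly independent elements is itself independent, "every $k$-subset independent" is precisely the definition of $k$-wise linear independence over $\mathbb{F}_q$ given just above. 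This establishes both directions simultaneously.

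The one nontrivial ingredient I would need to justify is the Moore determinant criterion itself — that the Moore matrix of $b_1,\dots,b_k$ is singular exactly when the $b_i$ are $\mathbb{F}_q$-dependent. For the "only if" direction (dependence $\Rightarrow$ singular): if $\sum_j \lambda_j b_j = 0$ with $\lambda_j\in\mathbb{F}_q$ not all zero, then applying the Frobenius $x\mapsto x^{q^{i-1}}$, which is $\mathbb{F}_q$-linear and fixes each $\lambda_j$, gives $\sum_j \lambda_j b_j^{q^{i-1}} = 0$ for every row $i$; hence the columns of the Moore matrix satisfy the same dependency and the matrix is singular. For the "if" direction (independent $\Rightarrow$ nonsingular): this is the standard induction on $k$ for Moore determinants — the rank of the Moore matrix equals the $\mathbb{F}_q$-dimension of $\operatorname{span}_{\mathbb{F}_q}\{b_1,\dots,b_k\}$; equivalently one expands the determinant and factors it as a product of linearized expressions, none of which vanishes under the independence hypothesis. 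I expect this inductive/determinant-factorization step to be the main obstacle, though it is entirely classical and can be cited; everything else is bookkeeping translating the MDS property into a statement about all $k$-subsets.
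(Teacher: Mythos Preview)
Your proposal is correct and is exactly the standard Moore--matrix argument one would give here. The paper itself does not write out a proof at all---it simply cites Lemma~3 of \cite{small}---so your write-up is a self-contained version of what the cited result establishes, via the same underlying mechanism (MDS $\iff$ all $k\times k$ minors nonzero $\iff$ every $k$-subset is $\mathbb{F}_q$-independent by the Moore determinant criterion).
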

\begin{proof}
Directly follows from Lemma 3 in \cite{small}.
\end{proof}

\begin{constr} \label{constr:HLMRC}
The structure of the parity check matrix($H$) of a $ [k, r_1, r_2, h_1, h_2, \delta]$ HL-MRC is given by
\[
H = \begin{bmatrix}
    H_0   & \\
          & H_0 & \\
          &     & \ddots & \\
          &     &        &  H_0 \\
    H_1  & H_2  & \hdots &  H_{t_1} \\
    \end{bmatrix}
H_0 = \begin{bmatrix}
	M_0   & \\
		  & M_0 & \\
		  &     & \ddots & \\
		  &     &        &  M_0 \\
M_1  & M_2  & \hdots &  M_{t_2} \\
\end{bmatrix}
\]
Here, $H_0$ is an $(t_2 \delta  + h_2)\times n_1 $ matrix and $H_i, 1\leq i \leq t_1$ are an $h_1\times n_1$ matrix. $H_0$ is then further subdivided into $M_i$.
$M_0$ has the dimensions $\delta \times n_2$ and $M_i, 1\leq i \leq t_2$ is an $h_2 \times n_2$ matrix.

Assume $q$ to be a prime power such that $q \geq n$, $\mathbb{F}_{q^{m_1}}$ be an extension field of $\mathbb{F}_q$ and $\mathbb{F}_{q^m}$ is an extension field of $\mathbb{F}_{q^{m_1}}$, where $m_1 \mid m$. 

In this case, the construction is given by the following.
\[
M_0 = \begin{bmatrix}
        1 & 1 & 1 & \ldots & 1 \\
        0 & \beta & \beta^2 & \ldots & \beta^{n_2 -1} \\
        0 & \beta^2 & \beta^4 & \ldots & \beta^{2(n_2-1)} \\
        \vdots & \vdots & \vdots & \ldots & \vdots \\
        0 & \beta^{\delta-1} & \beta^{2(\delta - 1)} & \ldots & \beta^{(\delta - 1)(n_2-1)}
        \end{bmatrix},
\]
where $\beta \in \mathbb{F}_q$ is a primitive element.
\[
M_i = \begin{bmatrix}
        \alpha_{i, 1} & \alpha_{i, 2} & \ldots & \alpha_{i, n_2} \\
        \alpha_{i, 1}^{q} & \alpha_{i, 2}^{q} & \ldots & \alpha_{i, n_2}^{q} \\
        \vdots & \vdots & \ldots & \vdots \\
        \alpha_{i, 1}^{q^{h_2-1}} &  \alpha_{i, 2}^{q^{h_2-1}} & \ldots &  \alpha_{i, n_2}^{q^{h_2-1}} \\
        \end{bmatrix},
\]
where $i \in [t_2]$, $ \alpha_{i,j} \in \mathbb{F}_{q^{m_1}}, 1 \leq i \leq t_2, 1 \leq j \leq n_2$.
\[
H_i = [H_{i,1} \ H_{i,2} \ldots H_{i,t_2}]
\]
\[
H_{i,s} = \begin{bmatrix}
        \lambda_{i,s,1} & \lambda_{i,s, 2} & \ldots & \lambda_{i, s,n_2} \\
        \lambda_{i, s,1}^{q^{m_1}} & \lambda_{i, s,2}^{q^{m_1}} & \ldots & \lambda_{i, s,n_2}^{q^{m_1}} \\
        \vdots & \vdots & \ldots & \vdots \\
        \lambda_{i, s,1}^{q^{m_1(h_1-1)}} &  \lambda_{i, s,2}^{q^{m_1(h_1-1)}} & \ldots &  \lambda_{i, s,n_2}^{q^{m_1(h_1-1)}} \\
        \end{bmatrix},
\]

where $i \in [t_1], s \in [t_2]$, $ \lambda_{i,s,j} \in \mathbb{F}_{q^{m}}, 1 \leq i \leq t_1, 1 \leq s \leq t_2, 1 \leq j \leq n_2$.


\end{constr}

A $(\delta,h_2)$ erasure pattern is defined by the following two sets: 

$\Delta$ is a three dimensional array of indices with the first dimension $i$ indexing the middle code and hence $1 \leq i \leq t_1$, the second dimension $s$ indexing the local code and hence $1 \leq s \leq t_2$. The third dimension $j$ varies from $1$ to $\delta$ and used to index the $\delta$ coordinates which are erased in the $(i,s)^{\text{th}}$ group. Let $e \in [n]$ denote the actual index of the erased coordinate in the code and $e \in B_{i,s}$, then we set $\Delta_{i,s,j} = (e \mod n_2) + 1$. $\Delta_{i,s}$ is used to denote the vector of $\delta$ coordinates which are erased in the $(i,s)^{\text{th}}$ group. $\bar{\Delta}_{i,s}$ is used to denote the complement of $\Delta_{i,s}$ in the set $[n_2]$.

$\Gamma$ is a two dimensional array of indices with the first dimension $i$ indexing the middle code  and hence $1 \leq i \leq t_1$. The second dimension $j$ varies from $1$ to $h_2$ and used to index the additional $h_2$ coordinates which are erased in the $i^{\text{th}}$ group. Let $e \in [n]$ denote the actual index of the erased coordinate in the code and $e \in A_i$, then we set $\Gamma_{i,j} = (e \mod n_1) + 1$. $\Gamma_{i}$ is used to denote the vector of $h_2$ coordinates which are erased in the $i^{\text{th}}$ group. $\bar{\Gamma}_{i}$ is used to denote the complement of $\Gamma_{i}$ in the set $[n_1] \setminus (\cup_{s=1}^{t_2} \Delta_{i,s})$.

We define some matrices and sets based on the parameters of the construction, which will be useful in proving the subsequent necessary and sufficient condition for the construction to be HL-MRC. Here, $\alpha_{s,\Delta_{i,s}}$ denotes the set $\{\alpha_{s,j} \mid j \in \Delta_{i,s}\}$.
\begin{eqnarray*}
L_{i,s} & = & (M_0|_{\Delta_{i,s}})^{-1}  M_0|_{\bar{\Delta}_{i,s}} \\
\Psi_i  & = & \{ \alpha_{s,\bar{\Delta}_{i,s}} + \alpha_{s,\Delta_{i,s}} L_{i,s}, 1 \leq s \leq t_2 \} \\
& = & \{ \Psi_{i,\Gamma_i}, \  \Psi_{i,\bar{\Gamma}_i} \} \\
& = & \{ \psi_{i,1}, \ldots, \psi_{i,h_2}, \psi_{i,h_2+1}, \ldots, \psi_{i, r_1+h_2} \}
\end{eqnarray*}
The above equalities follow by noting that the $\cup_{s=1}^{t_2} \bar{\Delta}_{i,s} = \Gamma_i \cup \bar{\Gamma}_i$.
We will refer to the elements in $\Psi_{i,\Gamma_i}$ by $ \{ \psi_{i,1}, \ldots, \psi_{i,h_2}\}$ and those in $\Psi_{i,\bar{\Gamma}_i}$ by $\{ \psi_{i,h_2+1}, \ldots, \psi_{i, r_1+h_2}\}$.
Consider the following matrix based on the elements of $\Psi_i$,
\begin{equation}
F_i = [F_i|_{\Gamma_i} \ \ F_i|_{\bar{\Gamma}_i}] = \begin{bmatrix}
        \psi_{i, 1} & \psi_{i, 2} & \ldots & \psi_{i, r_1+h_2} \\
        \psi_{i, 1}^{q} & \psi_{i, 2}^{q} & \ldots & \psi_{i, r_1+h_2}^{q} \\
        \vdots & \vdots & \ldots & \vdots \\
        \psi_{i, 1}^{q^{h_2-1}} &  \psi_{i, 2}^{q^{h_2-1}} & \ldots &  \psi_{i, r_1+h_2}^{q^{h_2-1}} \\
        \end{bmatrix},
\end{equation}
And
\begin{eqnarray*}
\Phi_i  & = & \{ \lambda_{i,s,\bar{\Delta}_{i,s}} + \lambda_{i,s,\Delta_{i,s}} L_{i,s}, 1 \leq s \leq t_2 \} \\
& = & \{ \Phi_{i,\Gamma_i}, \  \Phi_{i,\bar{\Gamma}_i} \} \\
& = & \{ \phi_{i,1}, \ldots, \phi_{i,h_2}, \phi_{i,h_2+1}, \ldots, \phi_{i, r_1+h_2} \}
\end{eqnarray*}

Let $Z_i = (F_i|_{\Gamma_i})^{-1} F_i|_{\bar{\Gamma}_i}$.
Finally, the set $\Theta = \{ \Phi_{i,\bar{\Gamma}_i} + \Phi_{i,\Gamma_i} Z_i, 1 \leq i \leq t_1 \}$.

\begin{thm} \label{thm:necsuf}
The code described in Construction \ref{constr:HLMRC} is a $[k, r_1, r_2, h_1, h_2, \delta]$ HL-MRC only if, for any $(\delta,h_2)$ erasure pattern, 
each $\Psi_i, 1 \leq i \leq t_1$ is $h_2$-wise independent over $\mathbb{F}_q$ and
$\Theta$ is $h_1$-wise independent over $\mathbb{F}_{q^{m_1}}$.
\end{thm}

\begin{proof}
By Lemma \ref{lem:midHLMRC}, we have that $\mathcal{C}$ is a HL-MRC only if the $\mathcal{C}|_{A_i}$ is a $[r_1, r_2, h_2, \delta]$ local MRC.
By the definition of local MRC, a code is a $[r_1, r_2, h_2, \delta]$ local MRC, if after puncturing $\delta$ coordinates in each of the $\frac{r_1+h_2}{r_2}$ local groups, the resultant code is $[r_1+h_2, r_1, h_2+1]$ MDS code. 

The puncturing on a set of coordinates in the code is equivalent to shortening on the same set of coordinates in the dual code. Shortening on a set of coordinates in the dual code can be performed by zeroing the corresponding coordinates in the parity check matrix by row reduction. To prove that  $\mathcal{C}|_{A_i}$ is a $[r_1, r_2, h_2, \delta]$ local MRC, we need to show that certain punctured codes are MDS (Definition \ref{defn:local}). We will equivalently that the shortened codes of the dual code are MDS. 

Consider the coordinates corresponding to $(i,s)^\text{th}$ group in the parity check matrix. The sub-matrix of interest in this case is the following:
\begin{equation*}
\left [\begin{array}{c|c}
M_0|_{\Delta_{i,s}} & M_0|_{\bar{\Delta}_{i,s}} \\
\hline
\alpha_{s,\Delta_{i,s}} & \alpha_{s,\bar{\Delta}_{i,s}} \\
\alpha_{s,\Delta_{i,s}}^q & \alpha_{s,\bar{\Delta}_{i,s}}^q \\
\vdots & \vdots \\
\alpha_{s,\Delta_{i,s}}^{q^{h_2-1}} & \alpha_{s,\bar{\Delta}_{i,s}}^{q^{h_2-1}}
\end{array} \right],
\end{equation*} 
Where $\alpha_{s,\Delta_{i,s}}^q$ is the vector obtained by taking $q^{\text{th}}$ power of each element in the vector.
Applying row reduction to the above matrix, we have
\begin{equation*}
\left [\begin{array}{c|c}
M_0|_{\Delta_{i,s}} & M_0|_{\bar{\Delta}_{i,s}} \\
\hline
\bold{0} & \alpha_{s,\bar{\Delta}_{i,s}} + \alpha_{s,\Delta_{i,s}} L_{i,s}\\
 \bold{0} & (\alpha_{s,\bar{\Delta}_{i,s}} + \alpha_{s,\Delta_{i,s}} L_{i,s})^q \\
\vdots & \vdots \\
\bold{0} & (\alpha_{s,\bar{\Delta}_{i,s}} + \alpha_{s,\Delta_{i,s}} L_{i,s})^{q^{h_2-1}}
\end{array} \right].
\end{equation*} 
Note that $L_{i,s}$ can be pushed into the power of $q$ since the elements of $L_{i,s}$ are in $\mathbb{F}_q$. After row reducing $\delta$ coordinates from each of the $\frac{r_1+h_2}{r_2}$ local groups in $A_i$, the resultant parity check matrix is $F_i$. Applying Lemma \ref{lem:mds_linearized}, $F_i$ forms the generator matrix of an MDS code if and only if the set $\Psi_i$ is $h_2$-wise independent over $\mathbb{F}_q$.
The shortening of the code above is applicable to mid-level parities. Now, we will apply similar shortening in two steps to global parities. The sub-matrix of interest in this case is the following:
\begin{equation*}
\left [\begin{array}{c|c}
M_0|_{\Delta_{i,s}} & M_0|_{\bar{\Delta}_{i,s}} \\
\hline
\alpha_{s,\Delta_{i,s}} & \alpha_{s,\bar{\Delta}_{i,s}} \\
\alpha_{s,\Delta_{i,s}}^q & \alpha_{s,\bar{\Delta}_{i,s}}^q \\
\vdots & \vdots \\
\alpha_{s,\Delta_{i,s}}^{q^{h_2-1}} & \alpha_{s,\bar{\Delta}_{i,s}}^{q^{h_2-1}} \\
\hline
\lambda_{i,s,\Delta_{i,s}} & \lambda_{i,s,\bar{\Delta}_{i,s}} \\
\lambda_{i,s,\Delta_{i,s}}^{q^{m_1}} & \lambda_{i,s,\bar{\Delta}_{i,s}}^{q^{m_1}} \\
\vdots & \vdots \\
\lambda_{i,s,\Delta_{i,s}}^{q^{m_1(h_1-1)}} & \lambda_{i,s,\bar{\Delta}_{i,s}}^{q^{m_1(h_1-1)}}
\end{array} \right]
\end{equation*} 
Applying row reduction to the above matrix, we have
\begin{equation*}
\left [\begin{array}{c|c}
M_0|_{\Delta_{i,s}} & M_0|_{\bar{\Delta}_{i,s}} \\
\hline
\bold{0} & \alpha_{s,\bar{\Delta}_{i,s}} + \alpha_{s,\Delta_{i,s}} L_{i,s}\\
 \bold{0} & (\alpha_{s,\bar{\Delta}_{i,s}} + \alpha_{s,\Delta_{i,s}} L_{i,s})^q \\
\vdots & \vdots \\
\bold{0} & (\alpha_{s,\bar{\Delta}_{i,s}} + \alpha_{s,\Delta_{i,s}} L_{i,s})^{q^{h_2-1}} \\
\hline
\bold{0} & \lambda_{i,s,\bar{\Delta}_{i,s}} + \lambda_{i,s,\Delta_{i,s}} L_{i,s}\\
 \bold{0} & (\lambda_{i,s,\bar{\Delta}_{i,s}} + \lambda_{i,s,\Delta_{i,s}} L_{i,s})^{q^{m_1}} \\
\vdots & \vdots \\
\bold{0} & (\lambda_{i,s,\bar{\Delta}_{i,s}} + \lambda_{i,s,\Delta_{i,s}} L_{i,s})^{q^{m_1(h_1-1)}}
\end{array} \right].
\end{equation*} 
To apply row reduction again, we consider the following submatrix obtained by deleting the zero columns and aggregating the non-zero columns from the $\frac{r_1+h_2}{r_2}$ groups, 
\begin{equation*}
\left [\begin{array}{c|c}
F_i|_{\Gamma_i} & F_i|_{\bar{\Gamma}_i} \\
\hline
\Phi_{i,\Gamma_i} & \Phi_{i,\bar{\Gamma}_i} \\
\Phi_{i,\Gamma_i}^{q^{m_1}} & \Phi_{i,\bar{\Gamma}_i}^{q^{m_1}} \\
\vdots & \vdots \\
\Phi_{i,\Gamma_i}^{q^{m_1(h_1-1)}} & \Phi_{i,\bar{\Gamma}_i}^{q^{m_1(h_1-1)}}
\end{array} \right].
\end{equation*}
Applying row reduction to the above matrix, we have
\begin{equation*}
\left [\begin{array}{c|c}
F_i|_{\Gamma_i} & F_i|_{\bar{\Gamma}_i} \\
\hline
\bold{0} & \Phi_{i,\bar{\Gamma}_i} + \Phi_{i,\Gamma_i} Z_i \\
\bold{0} & (\Phi_{i,\bar{\Gamma}_i} + \Phi_{i,\Gamma_i} Z_i)^{q^{m_1}} \\
\vdots & \vdots \\
\bold{0} & (\Phi_{i,\bar{\Gamma}_i} + \Phi_{i,\Gamma_i} Z_i)^{q^{m_1(h_1-1)}}
\end{array} \right].
\end{equation*}
Note that $Z_i$ can be pushed into the power of $q^{m_1}$ since the elements of $Z_i$ are in $\mathbb{F}_{q^{m_1}}$. Applying Lemma \ref{lem:mds_linearized}, the row reduced matrix above forms the generator matrix of an MDS code if and only if the set $\Theta$ is $h_1$-wise independent over $\mathbb{F}_{q^{m_1}}$.
 \end{proof}
 
 \begin{lem} \label{lem:hwise_ind}
 For any $(\delta,h_2)$ erasure pattern,
 \begin{itemize}
 \item For each $i$, $\Psi_i = \{ \alpha_{s,\bar{\Delta}_{i,s}} + \alpha_{s,\Delta_{i,s}} L_{i,s}, 1 \leq s \leq t_2 \}$ is $h_2$-wise independent over $\mathbb{F}_q$ if the set $\{ \alpha_{s,j}, 1 \leq s \leq t_2, 1 \leq j \leq n_2 \}$
 is $(\delta+1) h_2$-wise independent over $\mathbb{F}_q$.
 \item $\Theta = \{ \Phi_{i,\bar{\Gamma}_i} + \Phi_{i,\Gamma_i} Z_i, 1 \leq i \leq t_1 \}$ is $h_1$-wise independent over $\mathbb{F}_{q^{m_1}}$ if the set $\{ \lambda_{i,s,j}, 1 \leq i \leq t_1, 1 \leq s \leq t_2, 1 \leq j \leq n_2 \}$
 is $(\delta+1)(h_2+1)h_1$-wise independent over $\mathbb{F}_{q^{m_1}}$.
 \end{itemize}
 
 \end{lem}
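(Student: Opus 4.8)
The plan is to reduce both bullets to one elementary linear-algebra fact, invoked once for the first bullet and twice in a nested fashion for the second. Recall that, after the row reductions in the proof of Theorem~\ref{thm:necsuf}, each generator of interest has the shape ``a distinguished leading variable, plus a subfield-linear combination of a small controlled set of correction variables'', and that distinct generators have distinct leading variables which moreover never appear as a correction variable of any generator: the element $\psi$ at position $(s,j)$, $j\in\bar{\Delta}_{i,s}$, equals $\alpha_{s,j}+\sum_{j'\in\Delta_{i,s}}(L_{i,s})_{j',j}\,\alpha_{s,j'}$ with coefficients in $\mathbb{F}_q$; the element $\phi$ at position $(s,j)$ equals $\lambda_{i,s,j}+\sum_{j'\in\Delta_{i,s}}(L_{i,s})_{j',j}\,\lambda_{i,s,j'}$; and the element $\theta$ at position $(i,\tau)$, $\tau\in\bar{\Gamma}_i$, equals $\phi_{i,\tau}+\sum_{u\in\Gamma_i}(Z_i)_{u,\tau}\,\phi_{i,u}$ with coefficients in $\mathbb{F}_{q^{m_1}}$. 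The fact I would record is: if $x_1,\dots,x_N$ (in an extension of a field $\mathbb{K}$) are $K$-wise independent over $\mathbb{K}$ and $y_\ell=x_{p_\ell}+\sum_{j\in D_\ell}c_{\ell,j}x_j$ for $\ell\in[M]$ with $c_{\ell,j}\in\mathbb{K}$, the indices $p_1,\dots,p_M$ pairwise distinct, and $p_\ell\notin D_{\ell'}$ for all $\ell,\ell'$, then $y_1,\dots,y_M$ are $\mathbb{K}$-linearly independent whenever $\bigl|\{p_1,\dots,p_M\}\cup\bigcup_\ell D_\ell\bigr|\le K$; this is because those at most $K$ variables are independent by hypothesis and the coefficient matrix of the $y_\ell$ expressed in them has the $M\times M$ identity on the columns $p_1,\dots,p_M$.

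For the first bullet I would fix $i$ and an arbitrary $T\subseteq\Psi_i$ with $|T|=m\le h_2$, sitting at distinct positions $(s_\ell,j_\ell)$. The complementarity $\bar{\Delta}_{i,s}\cap\Delta_{i,s}=\varnothing$ together with the disjointness of distinct local groups furnishes the hypothesis $p_\ell\notin D_{\ell'}$. Grouping the $m$ positions by the value of $s$, the variables $\alpha_{s,j}$ they touch number at most the $m$ leading ones plus at most $\delta$ per distinct value of $s$, hence at most $m+m\delta\le(\delta+1)h_2$ in all; applying the fact with $\mathbb{K}=\mathbb{F}_q$ and the assumed $(\delta+1)h_2$-wise independence of $\{\alpha_{s,j}\}$ shows $T$ is $\mathbb{F}_q$-linearly independent, which is the first bullet.

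For the second bullet I would fix an arbitrary $T'\subseteq\Theta$ with $|T'|=m\le h_1$, at distinct positions $(i_\ell,\tau_\ell)$, and apply the fact twice. First (inner application), with the $\lambda_{i,s,j}$ as the $x$'s over $\mathbb{K}=\mathbb{F}_{q^{m_1}}$ and the $\phi$'s occurring in the chosen $\theta$'s as the $y$'s: for each value of $i$, at most $\#\{\ell:i_\ell=i\}+h_2$ such $\phi$'s occur (namely the $\phi_{i,\tau_\ell}$ and the whole block $\{\phi_{i,u}:u\in\Gamma_i\}$), so at most $h_1+h_1h_2=(h_2+1)h_1$ in total, and they touch at most $(\delta+1)(h_2+1)h_1$ of the $\lambda$'s; since these are $\mathbb{F}_{q^{m_1}}$-independent by hypothesis and $\bar{\Delta}_{i,s}\cap\Delta_{i,s}=\varnothing$ again supplies $p_\ell\notin D_{\ell'}$, the fact shows that all the occurring $\phi$'s form an $\mathbb{F}_{q^{m_1}}$-linearly independent set. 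Second (outer application), with those $\phi$'s (a linearly independent set of size at most $(h_2+1)h_1$) as the $x$'s and the chosen $\theta$'s as the $y$'s, using $\bar{\Gamma}_i\cap\Gamma_i=\varnothing$ for the condition $p_\ell\notin D_{\ell'}$: the fact then yields that $T'$ is $\mathbb{F}_{q^{m_1}}$-linearly independent, which is the second bullet.

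The step I expect to be the main obstacle is the bookkeeping in the second bullet: the outer application requires the occurring $\phi$'s to be fully (not merely pairwise) $\mathbb{F}_{q^{m_1}}$-independent, and the inner application can certify this only because the number $(h_2+1)h_1$ of those $\phi$'s, times the per-$\phi$ cost $\delta+1$, lands precisely on the allowance $(\delta+1)(h_2+1)h_1$ granted by the hypothesis — so the two budgets must be tracked simultaneously rather than separately. Everything else, namely the three index counts and the verification that a leading index is never a correction index (which follows from the complementarity of $\Delta_{i,s}$ and $\bar{\Delta}_{i,s}$, and of $\Gamma_i$ and $\bar{\Gamma}_i$), is routine.
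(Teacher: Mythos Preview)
Your proposal is correct and rests on the same counting idea as the paper: bound the number of underlying $\alpha$'s (respectively $\lambda$'s) touched by any would-be dependence among $h_2$ (respectively $h_1$) generators, then invoke the $k$-wise independence hypothesis. The paper's proof simply states the counts $(\delta+1)h_2$ and $(\delta+1)(h_2+1)h_1$ and concludes, leaving implicit why a nontrivial relation among the $\psi$'s or $\theta$'s forces a \emph{nontrivial} relation among the touched $\alpha$'s or $\lambda$'s; your explicit leading-variable/identity-block lemma supplies exactly that missing justification. Your two-stage reduction for the second bullet (first certify that the relevant $\phi$'s are $\mathbb{F}_{q^{m_1}}$-independent, then lift to the $\theta$'s) is a mild structural difference from the paper, which collapses both layers into a single direct count $\theta\mapsto\lambda$; both routes land on the same budget $(\delta+1)(h_2+1)h_1$, and your version has the virtue of reusing one clean lemma twice rather than tracking the double nesting by hand.
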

 \begin{proof}
 Since the size of matrix $L_{i,s}$ is $\delta \times (n_2 - \delta)$, each element of $\Psi_i$ can be a $\mathbb{F}_q$-linear combination of atmost $\delta + 1$ different $\alpha_{s,j}$. Consider $\mathbb{F}_q$-linear combination of $h_2$ elements in $\Psi_i$.  The linear combination will have at most $(\delta + 1) h_2$ different $\alpha_{s,j}$. Thus, if the set $\{ \alpha_{s,j} \}$ is $(\delta+1) h_2$-wise independent over $\mathbb{F}_q$, then $\Psi_i$ is $h_2$-wise independent over $\mathbb{F}_q$. To prove the second part, we note that each element of $\Phi_i$ is a linear combination of at most $\delta + 1$ different $\lambda_{i,s,j}$. Since the size of the matrix $Z_i$ is $h_2 \times r_1$, each element of $\Theta$ can be a $\mathbb{F}_{q^{m_1}}$-linear combination of atmost $(\delta+1)(h_2+1)$ different $\lambda_{i,s,j}$. Consider $\mathbb{F}_{q^{m_1}}$-linear combination of $h_1$ elements in $\Theta$.  The linear combination will have at most $(\delta + 1) (h_2+1)h_1$ different $\lambda_{i,s,j}$. Thus, if the set $\{ \lambda_{i,s,j} \}$ is $(\delta + 1) (h_2+1)h_1$-wise independent over $\mathbb{F}_{q^{m_1}}$, then $\Theta$ is $h_1$-wise independent over $\mathbb{F}_{q^{m_1}}$.
 \end{proof}
 
 We will design the $\{ \alpha_{s,j} \}$ and $\{ \lambda_{i,s,j} \}$ based on the Lemma \ref{lem:hwise_ind} so that the field size is minimum possible. We will pick these based on the following two properties:
 \begin{itemize}
 \item {\bf Property 1:} The columns of parity check matrix of an $[n,k,d]$ linear code over $\mathbb{F}_q$ can be interpreted as $n$ elements over $\mathbb{F}_{q^{n-k}}$ which are $(d-1)$-wise linear independent over $\mathbb{F}_q$.
 \item {\bf Property 2:} There exists $[n = q^t-1, k,d]$ BCH codes over $\mathbb{F}_q$ \cite{roth2006}, where the parameters are related as
 \begin{equation*}
 n-k  = 1 + \left \lceil \frac{q-1}{q} (d-2) \right \rceil  \lceil \log_2 (n)\rceil .
 \end{equation*}
 \end{itemize}
 
 \begin{thm} \label{thm:HLBCH}
 The code in Construction \ref{constr:HLMRC} is a $ [k, r_1, r_2, h_1, h_2, \delta]$ HL-MRC if the parameters are picked as follows:
 \begin{enumerate}
\item $q$ is the smallest prime power greater than $n_2$.
\item $m_1$ is chosen based on the following relation:
$m_1 = 1 + \left \lceil \frac{q-1}{q} ((\delta+1)h_2-1) \right \rceil  \lceil \log_q (n_2 t_2)\rceil$.
\item $n_2 t_2$ elements $\{ \alpha_{s,j} \}$ over $\mathbb{F}_{q^{m_1}}$ are set to be the columns of parity check matrix of the BCH code over $\mathbb{F}_q$ with parameters $[n = q^{\lceil \log_q (n_2 t_2) \rceil}-1, q^{\lceil \log_q (n_2 t_2)\rceil}-1-m_1, (\delta+1)h_2+1]$ .
\item $m$ is chosen to be the smallest integer dividing $m_1$ based on the following relation:
$m \geq 1 + \left \lceil \frac{q^{m_1}-1}{q^{m_1}} ((\delta+1)(h_2+1)h_1-1) \right \rceil \lceil \log_{q^{m_1}} (n)\rceil$.
\item $n$ elements $\{ \lambda_{i,s,j} \}$ over $\mathbb{F}_{q^{m}}$ are set to be the columns of parity check matrix of the BCH code over $\mathbb{F}_{q^{m_1}}$ with parameters $[n = q^{m_1\lceil \log_{q^{m_1}} (n) \rceil}-1, q^{m_1\lceil \log_{q^{m_1}} (n) \rceil}-1-m, (\delta+1)(h_2+1)h_1+1]$ .
\end{enumerate}
 \end{thm}
\begin{proof}
The proof follows from Lemma \ref{lem:hwise_ind} and Properties 1 and 2.
\end{proof}

%
%
%
%
%
\vspace{-2ex}
\section{HL-MRC Construction for $h_1 = 1$}
In this section, we present a construction of HL-MRC for the case when $h_1=1$ over a field size lower than that provided by Construction \ref{constr:HLMRC}.

\begin{constr}\label{constr:h1}
	The structure of the parity check matrix for the present construction is the same as that given in Construction \ref{constr:HLMRC}. In addition, the matrices $M_0$ and $M_i,\: 1 \leq i \leq t_2$ also remain the same. We modify the matrix $H_i, \; 1 \leq i \leq t_1$ as follows:
	\[ H_i = \begin{bmatrix}
	\alpha_{1, 1}^{q^{h_2}} & \alpha_{1, 2}^{q^{h_2}} & \ldots \alpha_{t_2, n_2}^{q^{h_2}}
	\end{bmatrix},
	\]
where $\{ \alpha_{s,j} \in \mathbb{F}_{q^{m_1}}, 1 \leq s \leq t_2, 1 \leq j \leq n_2 \}$ are chosen to be $(\delta+1)(h_2+1)$-wise independent over $\mathbb{F}_q$ based on Theorem \ref{thm:HLBCH}.

\end{constr}

\begin{thm}
The code $C$ given by Construction \ref{constr:h1} is a $ [k, r_1, r_2, h_1=1, h_2, \delta] $ HL-MRC.
\end{thm}

\begin{proof}
We show that $H$ can be used to correct all erasure patterns defined in Definition \ref{defn:HLMRC}. From the definition the code should recover from $\delta$ erasures per $B_{i,s}$, $h_2$ additional erasures per $A_i$ and $1$ more erasure anywhere in the entire code.

Now, with $h_1=1$, the last erasure can be part of one group. Thus, effectively the code should recover from $h_2+1$ erasures per group. Suppose that the last erasure is in the $i^{\text{th}}$ group. The submatrix of interest for the $(i,s)^{th}$ local group is
\begin{equation*}
\left [\begin{array}{c|c}
M_0|_{\Delta_{i,s}} & M_0|_{\bar{\Delta}_{i,s}} \\
\hline
\alpha_{s,\Delta_{i,s}} & \alpha_{s,\bar{\Delta}_{i,s}} \\
\alpha_{s,\Delta_{i,s}}^q & \alpha_{s,\bar{\Delta}_{i,s}}^q \\
\vdots & \vdots \\
\alpha_{s,\Delta_{i,s}}^{q^{h_2-1}} & \alpha_{s,\bar{\Delta}_{i,s}}^{q^{h_2-1}} \\
\hline
\alpha_{s,\Delta_{i,s}}^{q^{h_2}} & \alpha_{s,\bar{\Delta}_{i,s}}^{q^{h_2}} \\
\end{array} \right].
\end{equation*}

Following the proof of Theorem \ref{thm:necsuf} and performing row reduction of $\delta$ coordinates, the resultant matrix  is
\[
\begin{bmatrix}
	\psi_{i, 1} & \psi_{i, 2} & \ldots & \psi_{i, r_1+h_2} \\
	\psi_{i, 1}^{q} & \psi_{i, 2}^{q} & \ldots & \psi_{i, r_1+h_2}^{q} \\
	\vdots & \vdots & \ldots & \vdots \\
	\psi_{i, 1}^{q^{h_2-1}} &  \psi_{i, 2}^{q^{h_2-1}} & \ldots &  \psi_{i, r_1+h_2}^{q^{h_2-1}} \\
	\psi_{i, 1}^{q^{h_2}} &  \psi_{i, 2}^{q^{h_2}} & \ldots &  \psi_{i, r_1+h_2}^{q^{h_2}} \\
\end{bmatrix}.
\]	
Now, by Lemma \ref{lem:mds_linearized}, it is the generator matrix of an MDS code if and only if $\Psi_i$ is $(h_2+1)$-wise independent over $\mathbb{F}_q$.
\end{proof}
\vspace{-3ex}
\section*{Acknowledgment} This work was supported partly by the Early Career Research Award (ECR/2016/000954) from Science and Engineering Research Board (SERB) to V. Lalitha.

\bibliography{biblo}{}

\begin{thebibliography}{10}

\bibitem{gopalan2012locality}
P.~Gopalan, C.~Huang, H.~Simitci, and S.~Yekhanin, ``On the locality of
  codeword symbols,'' {\em IEEE Transactions on Information Theory}, vol.~58,
  no.~11, pp.~6925--6934, 2012.

\bibitem{locality}
G.~M. Kamath, N.~Prakash, V.~Lalitha, and P.~V. Kumar, ``Codes with local
  regeneration and erasure correction,'' {\em IEEE Transactions on Information
  Theory}, vol.~60, no.~8, pp.~4637--4660, 2014.

\bibitem{mrc}
P.~Gopalan, C.~Huang, B.~Jenkins, and S.~Yekhanin, ``Explicit maximally
  recoverable codes with locality.,'' {\em IEEE Trans. Information Theory},
  vol.~60, no.~9, pp.~5245--5256, 2014.

\bibitem{pmds}
M.~Blaum, J.~L. Hafner, and S.~Hetzler, ``Partial-mds codes and their
  application to raid type of architectures.,'' {\em IEEE Trans. Information
  Theory}, vol.~59, no.~7, pp.~4510--4519, 2013.

\bibitem{blaum2016construction}
M.~Blaum, J.~S. Plank, M.~Schwartz, and E.~Yaakobi, ``Construction of partial
  mds and sector-disk codes with two global parity symbols,'' {\em IEEE
  Transactions on Information Theory}, vol.~62, no.~5, pp.~2673--2681, 2016.

\bibitem{chen2015sector}
J.~Chen, K.~W. Shum, Q.~Yu, and C.~W. Sung, ``Sector-disk codes and partial mds
  codes with up to three global parities,'' in {\em Information Theory (ISIT),
  2015 IEEE International Symposium on}, pp.~1876--1880, IEEE, 2015.

\bibitem{calis2017general}
G.~Calis and O.~O. Koyluoglu, ``A general construction for pmds codes,'' {\em
  IEEE Communications Letters}, vol.~21, no.~3, pp.~452--455, 2017.

\bibitem{small}
R.~Gabrys, E.~Yaakobi, M.~Blaum, and P.~H. Siegel, ``Constructions of partial
  mds codes over small fields,'' in {\em Information Theory (ISIT), 2017 IEEE
  International Symposium on}, pp.~1--5, IEEE, 2017.

\bibitem{martinez2018universal}
U.~Mart{\'\i}nez-Pe{\~n}as and F.~R. Kschischang, ``Universal and dynamic
  locally repairable codes with maximal recoverability via sum-rank codes,''
  {\em arXiv preprint arXiv:1809.11158}, 2018.

\bibitem{hu2016new}
G.~Hu and S.~Yekhanin, ``New constructions of sd and mr codes over small finite
  fields,'' {\em arXiv preprint arXiv:1605.02290}, 2016.

\bibitem{guruswami2018constructions}
V.~Guruswami, L.~Jin, and C.~Xing, ``Constructions of maximally recoverable
  local reconstruction codes via function fields,'' {\em arXiv preprint
  arXiv:1808.04539}, 2018.

\bibitem{hlocality}
B.~Sasidharan, G.~K. Agarwal, and P.~V. Kumar, ``Codes with hierarchical
  locality,'' in {\em Information Theory (ISIT), 2015 IEEE International
  Symposium on}, pp.~1257--1261, IEEE, 2015.

\bibitem{roth2006}
R.~Roth, {\em Introduction to coding theory}.
\newblock Cambridge University Press, 2006.

\end{thebibliography}
\bibliographystyle{ieeetr}

\end{document}